\theoremstyle{plain}
\newtheorem{thm}{\protect\theoremname}
\theoremstyle{plain}
\newtheorem{prop}[thm]{\protect\propositionname}
\providecommand{\propositionname}{Proposition}
\providecommand{\theoremname}{Theorem}
\providecommand{\propositionname}{Proposition}
\providecommand{\theoremname}{Theorem}
\begin{document}
\title{Optimal Post-Hoc Theorizing}

\newif\ifanon
\anontrue
\anonfalse 

\ifanon
    \author{}
    \date{}
\else
    \author{{Andrew Y. Chen}\\
    {\normalsize Federal Reserve Board}}
    \date{June 2025\thanks{email:andrew.y.chen@frb.gov. I thank Irene Caracioni for excellent research assistance, and Alejandro Lopez-Lira, Matt Ringgenberg, Mish Velikov, and Tom Zimmermann for helpful comments. The views expressed herein are those of the authors and do not necessarily reflect the position of the Board of Governors of the Federal Reserve or the Federal Reserve System.}}
\fi

\maketitle

\begin{abstract}
\begin{singlespace}
\noindent For many economic questions, the empirical results are not interesting unless they are strong. For these questions, theorizing before the results are known is not always optimal. Instead, the optimal sequencing of theory and empirics trades off a ``Darwinian Learning'' effect from theorizing first with a ``Statistical Learning'' effect from examining the data first. This short paper formalizes the tradeoff in a Bayesian model. In the modern era of mature economic theory and enormous datasets, I argue that \emph{post hoc} theorizing is typically optimal.
\end{singlespace}
\end{abstract}
\vspace{10ex}
\textbf{\color{Black}JEL Classification}: B41, C18, C11

\noindent\textbf{\color{Black}Keywords}: Publication Bias, Machine Learning,  Predictivism vs Accommodation, HARKing
\thispagestyle{empty}\setcounter{page}{0}

\vspace{10ex}

\pagebreak{}

\section{Introduction}

\setcounter{page}{1}

Theories formed after observing empirical results (\emph{post hoc} theories), are viewed with suspicion by social scientists (e.g. \citet{kerr1998harking}; \citet{harvey2017presidential}). Yet some of the most successful theories in all of science were formed this way (e.g. gravity, quantum mechanics).\footnote{\citet{newton1726scholium} even said ``whatever is not deduced from the phenomena... ... have no place in experimental philosophy.''} Consistent with this confusion, the philosophy literature has long debated the merits of \emph{post hoc} vs \emph{a priori} theorizing (\citet{barnes2022prediction})

This paper provides a Bayesian model for understanding this ``paradox.'' It shows \emph{post hoc} theory is clearly suboptimal if the sole goal of research is unbiased empirical results. Given statistics' 100-year obsession with unbiasedness (\citet{efron2001statistical}), it is perhaps unsurprising that \emph{post hoc} theory is viewed suspiciously.

However, the goal of research is typically more than unbiased empirical results. Another ubiquitous goal of research is to find ``a good idea,'' whether the idea is an investment strategy, health intervention, or model of human language. In such settings, statistical bias may matter little, as long as research provides a powerful solution.

If the goal is a ``good idea,'' then the optimal research method trades off a \emph{Darwinian Learning} effect with a \emph{Statistical Learning} effect. Darwinian Learning comes from weeding out bad theories by subjecting them to prediction competitions. Statistical Learning simply comes from theorists improving their ideas after looking at data. If Statistical Learning is stronger than Darwinian Learning, then \emph{post hoc} theorizing is optimal.

In the modern world of enormous datasets and massive computing power, Statistical Learning is becoming more and more powerful. At the same time, the economic sciences have become mature, and Darwinian Learning has arguably run its course. For these reasons, I argue that \emph{post hoc} theorizing is, in most cases, optimal.

For replication code and all previous versions of this paper, see \url{https://github.com/chenandrewy/Post-hoc/}.

\subsection{Related Literature}

My model is an extension of the publication bias models (\citet{hedges1984estimation}; \citet{brodeur2016star}; \citet{andrews2019identification}; \citet{abadie2020statistical}; \citet{chen2020publication}; \citet{jensen2023there}; \citet{kasy2024optimal}). In these papers, it is unclear whether \emph{post hoc} theory is harmful. In fact, the models in these papers exhibit the irrelevance result found in \citet{hempel1966philosophy}; \citet{lakatos1970methodology}; and elsewhere (see Section \ref{sec:ez:irr}). Building on the insights of from the philosophy literature (namely \citealt{maher1988prediction}), I show how heterogeneous theories breaks this irrelevance. 

In the philosophy literature, Maher (\citeyear{maher1988prediction}, \citeyear{maher1990prediction}) and Kahn, Landsburg, and Stockman (\citeyear{kahn1992novel}; \citeyear{kahn1996positive}) (KLS) study \emph{post hoc} theorizing under heterogeneous theories. They document the selection effect that I call Darwinian Learning, and conclude that \emph{a priori} theorizing is optimal, at least in normal scientific settings. Amid the centuries of  debate  (e.g. \citet{leibniz1678letter}; \citet{newton1726scholium}; \citet{keynes1921treatise}), \citet{barnes1996discussion} describes Maher's analysis as ``the closest thing to an illuminating account of predictivism in existence.'' Predictivism is the view that \emph{a priori} theorizing is optimal.

My paper builds on Maher and KLS by showing how there is an offsetting effect to Darwinian Learning, namely Statistical Learning. This effect is ruled out by the assumptions in Maher and KLS. Statistical Learning is perhaps a natural extension of one of \citepos{howson1991maher} criticisms of Maher \citeyearpar{maher1988prediction} and \citeyearpar{maher1990prediction}, though \citet{maher1993discussion} also points out flaws in \citepos{howson1991maher} criticisms.  My paper provides clarity to this debate. Also unlike Howson and Franklin, I show how to connect Maher's and KLS's ideas to the models of publication bias, and the broader statistics literature on large scale inference  (\citet{efron2012large}).

\section{A Very Simple Model of Research}\label{sec:ez}

Idea $i$ is randomly-drawn from a set $\left\{ 1,2,...,N\right\}$, and has quality $\mu_{i}$. $\mu_{i}$ is unknown but researchers can observe the measured quality
\begin{align}
\hat{\mu}_{i} &= \mu_{i} + \varepsilon_{i}
\label{eq:ez:muhat}
\end{align}
where $E\left(\varepsilon_{i}\right) = 0$. $i$ may be a real-world choice for readers (e.g. an investment strategy), in which case $\mu_{i}$ is the realized, quality of $i$ after the research is finished (``post-research''). Or $i$ may be an explanation for some phenomenon (e.g. a model of obesity in adolescents), in which case $\mu_{i}$ is the explanation's fit to the phenomenon, post-research. In either case, higher $\mu_{i}$ is better.

Using theory rules out some ideas:
\begin{align}
    \text{\ensuremath{i} is consistent with theory if }i\in S .
\end{align}
where
\begin{align}
    S \subset \left\{ 1,2,...,N\right\}
    \label{eq:ez:S}.
\end{align}
``Theorizing'' turns $S$ into a selected idea $i^\ast$, and theorizing is either \emph{a priori} or \emph{post hoc}: 
\begin{itemize}
\item \emph{a priori}: the researcher writes down a theory that recommends a selected idea $i^\ast$, which is randomly-selected from $S$. (In this simple model, all ideas are equally consistent with theory.)
\item \emph{post hoc}: the researcher first examines the data (observes $\{\hat{\mu}_1,\hat{\mu}_2,...\hat{\mu}_N\}$). Then she writes down a theory that results in selecting
\begin{align}
    i^\ast =  \arg\max_{i\in S }\hat{\mu}_{i}.\label{eq:given-ihat}
\end{align}
(The researcher chooses the idea with the highest measured quality, subject to the idea being consistent with theory.)
\end{itemize}
In either case, the theory is some math or text that explains why $i^\ast$ is a good idea. In this simple model, the precise nature of the theory is not important, beyond that it argues for selecting $i^\ast$. 

\subsection{Popper's Falsifiability and HARKing}

I assume that idea $i^\ast$ and its supporting theory are eventually re-examined with post-research data via $\mu_{i^\ast}$ (see discussion after Equation \eqref{eq:ez:muhat}).  I also require that $S$ is well-defined, and does not nest all ideas $\left\{1,2,...,N\right\}$.  In other words, I assume theories are falsifiable, in the sense of \citet{Popper1959}. 

One may be concerned that these assumptions are inappropriate for some social sciences. Indeed, \citet{ankel2025economics} provide disturbing evidence that economic theories may be immunized against refutation. If economic theories are indeed, not falsifiable, then they might as well be fairy tales. Whether fairy tales are better told \emph{a priori} or \emph{post hoc} is beyond the scope of this paper.

Perhaps because of \citet{kerr1998harking} (``HARKing: Hypothesizing after the Results are Known''), many researchers equate \emph{post hoc} theorizing with unfalsifiability. However, as seen in this model, constructing theories \emph{post hoc} can be entirely consistent with Popper's notion of science. 

This confusion likely stems from Kerr's loose use of language. The paper has a section titled  ``HARKed Hypotheses Fail Popper's Criterion of Disconfirmability.'' But the text below the title clarifies, ``[a] HARKed hypothesis fails this criterion, at least in a narrow, temporal sense.'' In other words, the text in the section explains that the section title is not necessarily true. In fact, it seems equally reasonable to say that HARKed hypotheses fail Popper's criterion \emph{only} in a narrow, temporal sense.  Errors like these are found throughout \citet{kerr1998harking}. See \citet{rubin2022costs} for a thorough critique.


\subsection{\emph{A Priori} Theorizing is the Unbiased Ideal}

If the sole goal of research is to find an unbiased estimate of idea quality, then \emph{a priori} theorizing achieves this goal.  The expected $\hat{\mu}_{i}$ from \emph{a priori} theorizing satisfies
\begin{align}
E\left(\hat{\mu}_{i}\mid i\in S \right) &= E\left(\mu_{i}\mid i\in S \right).
\label{eq:unbiased}
\end{align}
where $i$ is randomly selected from $S$. In contrast, the expected $\hat{\mu}_{i}$ from \emph{post hoc} theorizing is clearly biased:
\begin{lemma}\label{lem:ez:biased}
    \begin{align}
        E\Bigl(\hat{\mu}_{i}\big|i=\arg\max_{j\in S }\hat{\mu}_{j}\Bigr) & >E\Bigl(\mu_{i}\big|i=\arg\max_{j\in S }\hat{\mu}_{j}\Bigr).
        \label{eq:biased}
    \end{align}            
\end{lemma}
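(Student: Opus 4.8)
The plan is to rewrite the claimed inequality as a statement about the expected noise of the \emph{selected} idea. Writing $i^\ast=\arg\max_{j\in S}\hat\mu_j$ for the post hoc choice and subtracting the right-hand side of \eqref{eq:biased} from the left, the inequality is equivalent to
\begin{align}
E\left(\hat\mu_{i^\ast}-\mu_{i^\ast}\right)=E\left(\varepsilon_{i^\ast}\right)>0,
\end{align}
since $\hat\mu_i-\mu_i=\varepsilon_i$ by \eqref{eq:ez:muhat}, and both conditional expectations in \eqref{eq:biased} are taken over the same selection event. In words, the selected idea is exactly the one whose noise draw was most favorable, so conditioning on being chosen tilts its noise upward — a winner's curse. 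The whole proof reduces to showing this single expectation is strictly positive.

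First I would decompose over which idea wins,
\begin{align}
E\left(\varepsilon_{i^\ast}\right)=\sum_{i\in S}E\left(\varepsilon_{i}\,\mathbf{1}\{i^\ast=i\}\right),
\end{align}
and evaluate each term by conditioning on everything except the $i$-th noise, i.e. on $\mu_1,\dots,\mu_N$ and $\{\varepsilon_j:j\neq i\}$. Conditional on this information the event $\{i^\ast=i\}=\{\hat\mu_i\ge\hat\mu_j\ \forall j\in S\}$ collapses to a one-sided threshold event $\{\varepsilon_i\ge c_i\}$, where $c_i=\max_{j\in S,\,j\neq i}\hat\mu_j-\mu_i$ is a constant given the conditioning. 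Assuming $\varepsilon_i$ is (mean-)independent of these conditioning variables with $E(\varepsilon_i)=0$, each conditional contribution equals $E\left(\varepsilon_i\,\mathbf{1}\{\varepsilon_i\ge c_i\}\right)$, and the law of iterated expectations recombines them.

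The crux is then a purely one-dimensional lemma: for any mean-zero $X$ and any threshold $c$, $E\left(X\,\mathbf{1}\{X\ge c\}\right)\ge 0$. This follows from $E(X)=0$, which gives $E\left(X\,\mathbf{1}\{X\ge c\}\right)=-E\left(X\,\mathbf{1}\{X<c\}\right)$: when $c\le 0$ the right-hand side averages a variable that is negative on $\{X<c\}$, so it is $\ge 0$; when $c>0$ the left-hand side satisfies $E\left(X\,\mathbf{1}\{X\ge c\}\right)\ge c\,\Pr(X\ge c)\ge 0$ directly. Summing the nonnegative conditional contributions over $i\in S$ yields $E(\varepsilon_{i^\ast})\ge 0$.

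The main obstacle is strictness, where the modeling assumptions become essential. If $|S|=1$ the selected idea is deterministic and both sides of \eqref{eq:biased} coincide, so I expect the statement to implicitly require $|S|\ge 2$ together with a non-degenerate noise distribution; under these conditions at least one threshold $c_i$ lands strictly inside the support of $\varepsilon_i$ with positive probability, making its contribution strictly positive and hence the sum strictly positive. The other delicate point is the independence structure: the clean reduction to a one-sided threshold relies on $\varepsilon_i$ being independent of the other qualities and noises (or at least conditionally mean-zero), which I would state explicitly rather than leave to the reader. Ties in the argmax are harmless — under a continuous noise distribution they have probability zero and do not affect the inequality.
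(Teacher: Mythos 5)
Your proposal is correct and takes essentially the same route as the paper: both reduce the claim to showing that the selected idea's noise $\varepsilon_{i^\ast}$ has positive expectation, and both observe that conditional on everything else the selection event is a one-sided lower truncation of the mean-zero $\varepsilon_i$. Your write-up simply makes rigorous what the paper states loosely (that the selection ``cuts off the lower tail''), and the caveats you flag --- independence of $\varepsilon_i$ from the conditioning variables, $|S|\ge 2$, and non-degenerate noise --- are exactly the implicit assumptions the paper's terse argument relies on for strict inequality.
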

\begin{proof}
    The LHS can be written as 
    \begin{align*}
        &E\left(\mu_i \big| i = \arg\max_{j \in S} \hat{\mu}_j\right) 
        +
        E\left(\varepsilon_i \big| i = \arg\max_{j \in S} \hat{\mu}_j\right) \\        
        &= E\left(\mu_i \big| i = \arg\max_{j \in S} \hat{\mu}_j\right) 
        +
        E\bigg(\varepsilon_i
        \big|
        i \in S,
        \{
         \varepsilon_i 
        > \hat{\mu}_j - \mu_i
        , \quad
         \forall j \in \left( S\setminus\{i\}
         \right)
         \}
         \bigg)
    \end{align*}
The first term is the RHS of Equation \eqref{eq:unbiased}. Thus we just need to show the second term is positive.

The second term is positive because $E\left(\varepsilon_i\big| i \in S\right) = 0$, and because the second condition on $\varepsilon_i$  cuts off the lower tail of the distribution.
\end{proof}

Intuitively, $\hat{\mu}_{i}$ contains both $\mu_{i}$ and measurement error. Selecting on large $\hat{\mu}_{i}$ then selects for positive measurement error, leading to a biased estimate.

The preference for Equation \eqref{eq:unbiased}, and the fear of Equation \eqref{eq:biased}, goes back to \citet{fisher1925statistical}. As described in \citet{efron2001statistical}:
\begin{quote}
    \emph{From the point of view of statistical development, the twentieth century might be labeled ``100 years of unbiasedness.'' Following Fisher's lead, most of our current statistical theory and practice revolves around unbiased or nearly unbiased estimates (particularly MLEs), and tests based on such estimates. The power of this theory has made statistics the dominant interpretational methodology in dozens of fields.}
\end{quote}
Taken with Lemma \ref{lem:ez:biased}, it is no wonder then, that economists are suspicious of \emph{post hoc} theorizing.

\subsection{In Practice, \emph{Post Hoc} Theorizing is Optimal}\label{sec:ez:practical}

In an ideal world, estimates from \emph{a priori} theorizing are all you need. With many, many of these estimates, one  eventually has estimates for every idea, including the best ideas.

But in the real world, consumers and producers of research have limited time. Consumers of research lack the time to read about every idea. Producers of research lack the time to carefully study every idea.

To introduce this real-world limitation, suppose research is restricted to reporting only a single idea, and readers are interested in the idea with the highest quality.

In this case, \emph{post hoc} theorizing is actually optimal. \emph{Post hoc} theorizing uses both the information in theory (Equation \eqref{eq:ez:S}) and the information in the data (Equation \eqref{eq:ez:muhat}), improving its expected quality:
\begin{lemma}\label{lem:ez:post-hoc-opt}
    \begin{align}
        E\bigg(\mu_{i}\big|i=\arg\max_{i'\in S }\hat{\mu}_{i'}\bigg) & >E\left(\mu_{i}\mid i\in S \right).
        \label{eq:ez:post-hoc-opt}
    \end{align}            
\end{lemma}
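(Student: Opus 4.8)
The plan is to condition on the theory-consistent set $S$ and on the realized qualities $\{\mu_j : j \in S\}$, which reduces the claim to a statement about a fixed collection of numbers. Writing $n = |S|$, the right-hand side is the uniform average $\frac{1}{n}\sum_{j \in S}\mu_j$, while the left-hand side is the $\mu$-weighted average $\sum_{j\in S}\mu_j\, p_j$, where $p_j = P\bigl(j = \arg\max_{k\in S}\hat{\mu}_k\bigr)$ is the probability that idea $j$ wins the measured-quality competition. Since $\sum_{j} p_j = 1 = \sum_j \frac{1}{n}$, the two selection rules are just two probability weightings of the same qualities, so the gap between the two sides is (up to the factor $n$) the covariance between quality and win-probability across $S$.

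The engine of the proof is the monotonicity claim that $p_j$ is nondecreasing in $\mu_j$: if $\mu_a \le \mu_b$ then idea $b$ is at least as likely to be selected as idea $a$. I would establish this by an exchange argument. Assuming the noise terms are i.i.d. (so their joint law is invariant under swapping the $a$ and $b$ coordinates) and continuous (so the argmax is a.s. unique), take any realization in which $a$ wins and swap the values $\varepsilon_a \leftrightarrow \varepsilon_b$. The relation $\mu_a + \varepsilon_a > \mu_b + \varepsilon_b$, which holds because $a$ beat $b$, combined with $\mu_b \ge \mu_a$, forces $b$ to win in the swapped realization, both against every other idea and against $a$ itself. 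This exhibits a probability-preserving injection of $\{a\text{ wins}\}$ into $\{b\text{ wins}\}$, hence $p_a \le p_b$.

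With monotonicity in hand, the qualities $\mu_j$ and the weights $p_j$ are similarly ordered, so Chebyshev's sum inequality (equivalently, the positivity of the covariance of two comonotone sequences) yields $\sum_j \mu_j p_j \ge \frac{1}{n}\sum_j \mu_j$. The inequality is strict whenever the $\mu_j$ are not all equal and the noise is nondegenerate, since then the win-probabilities are genuinely tilted toward the higher-quality ideas. Finally I would integrate over the randomness in $S$ and in $\{\mu_j\}$: the conditional inequality holds pointwise and is strict on the event that $S$ contains heterogeneous qualities, which carries positive probability under the maintained assumption that theories are heterogeneous, so the unconditional strict inequality follows.

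I expect the main obstacle to be the monotonicity step, both in making the exchange argument fully rigorous (uniqueness of the argmax, measure-preservation of the coordinate swap) and in pinning down which assumptions on the noise are actually needed—the model states only $E(\varepsilon_i)=0$, whereas the comparison genuinely requires exchangeability of the noise and its independence from $\mu$. I would also flag that one cannot simply chain off Lemma \ref{lem:ez:biased}: that lemma shows $E(\hat{\mu}_{i^\ast}) > E(\mu_{i^\ast})$, i.e. the selected estimate overstates its own quality, which pushes in the wrong direction and so cannot by itself deliver the comparison against the \emph{a priori} benchmark.
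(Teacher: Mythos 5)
Your proof is correct, but it takes a genuinely different route from the paper's. The paper's proof is a one-step truncation argument: it rewrites the event $i=\arg\max_{j\in S}\hat{\mu}_{j}$ as $i\in S$ together with $\mu_{i}>\hat{\mu}_{j}-\varepsilon_{i}$ for all $j\in S\setminus\{i\}$, and then observes that this extra condition ``cuts off the lower tail'' of the distribution of $\mu_{i}$ --- the threshold $\max_{j\neq i}\hat{\mu}_{j}-\varepsilon_{i}$ is (under the implicit independence assumptions) independent of $\mu_{i}$, so conditioning on $\mu_{i}$ exceeding it can only raise its expectation. Your argument instead conditions on the realized qualities $\{\mu_{j}:j\in S\}$, recasts both sides as probability-weighted averages of the same numbers, proves the comonotonicity of win-probabilities $p_{j}$ and qualities $\mu_{j}$ via a measure-preserving coordinate swap, and closes with Chebyshev's sum inequality. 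What the paper's route buys is brevity and symmetry with its proof of Lemma \ref{lem:ez:biased} (both are ``selection truncates a tail'' arguments); what your route buys is rigor and transparency about what drives the result --- the positive covariance between quality and the probability of winning the measured-quality contest --- along with an explicit accounting of the assumptions the terse model statement leaves implicit (noise i.i.d., or at least exchangeable, and independent of the qualities; continuity for a.s.\ uniqueness of the argmax; nondegeneracy for strictness). Both proofs need those same unstated assumptions; the paper's ``cuts off the lower tail'' claim is not literally a fixed-threshold truncation and requires the same independence structure you flag to be made rigorous. Your closing observation is also apt: Lemma \ref{lem:ez:biased} bounds $E(\hat{\mu}_{i^{\ast}})$ against $E(\mu_{i^{\ast}})$ for the \emph{same} selected idea, so it cannot be chained to compare against the \emph{a priori} benchmark, and the paper indeed does not attempt that.
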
 
\begin{proof}
    The LHS can be written as 
    \begin{align*}
        E\bigg(\mu_i
        \big|
        i \in S,
        \{
         \mu_i 
        > \hat{\mu}_j - \varepsilon_i
        , \quad
         \forall j \in \left( S\setminus\{i\}
         \right)
         \}
         \bigg)
    \end{align*}
The second condition in this expression cuts off the lower tail of the distribution of $\mu_i$. Thus, this expression exceeds the expectation of $\mu_i$ conditioning on $i\in S$ alone, which is the RHS of Equation \eqref{eq:ez:post-hoc-opt}.
\end{proof}

Lemmas \ref{lem:ez:biased} and \ref{lem:ez:post-hoc-opt} are illustrated in Figure \ref{fig:ez}. It simulates 200 selected ideas, with the number of potential ideas $N=100$, $\mu_i \sim \text{Normal}\left(0, 1\right)$, and $\varepsilon_i \sim \text{Normal}\left(0, 1\right)$. \emph{A priori} theorizing leads to less biased estimates, seen in how the dots lie closer to the 45 degree line. However, \emph{post hoc} theorizing leads to higher quality ideas, seen  in how the stars tend to lie toward the right side of the chart.

\begin{figure}[htbp]
    \centering
    \includegraphics[width=\textwidth]{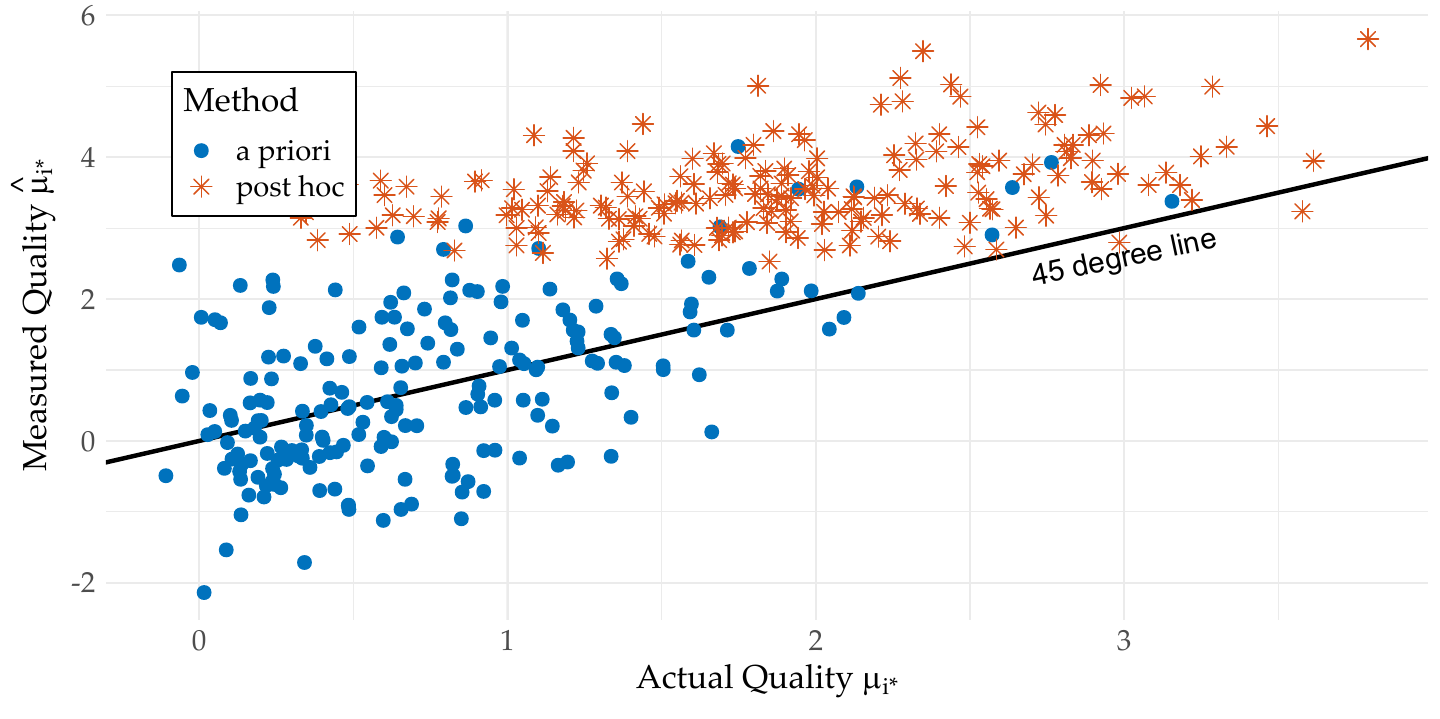}
    \caption{200 Ideas Generated by a Very Simple Model of Research}
    \label{fig:ez}
\end{figure}

The literature on stock market anomalies is an example of Lemma \ref{lem:ez:post-hoc-opt}. Readers are interested in both the magnitude of anomalies, as well as which ones are the strongest.  But assuming that the magnitude meets some minimal standard, readers with limited time will just want to know which anomalies will perform the best in the future. Lemma \ref{lem:ez:post-hoc-opt} shows that, in this case, researchers \emph{should} mine the data, and report what has worked best in the past. This prescription is exactly the reverse of the conventional wisdom, that emphasizes the ``dangers'' of data mining (\citealt{sullivan1999data}; \citealt{harvey2016and}). However, it seems to be in-line with empirical practice, and performs quite well (\citealt{chen2024does}).

Large language models (LLMs) are another example. These models are tuned to perform well on common benchmarks like MMLU (Measuring Massive Multitask Language Understanding) (e.g. \citet{guo2025deepseek}). Thus, the performance on these benchmarks is biased upward, just as in Lemma \ref{lem:ez:biased}. But in practice, this bias is not important, as long as the resulting out-of-sample performance is strong. Tuning improves out-of-sample performance, as seen in Lemma \ref{lem:ez:post-hoc-opt}. 

\subsection{An Irrelevance Result}\label{sec:ez:irr}

In practice, the Fisherian ideal is impossible. Even if all researchers use theory a priori, readers with time constraints are more likely to read the research if the measured effect is large.  This limited attention is arguably the raison d'etre of both peer review (\citet{klamer2002attention}) and publication bias (\citet{chen2022publication})

To model limited attention, suppose \emph{a priori} theory actually involves two steps. First, researchers study all ideas in $S$ and draft up their theories and empirical findings in working papers. However, not all ideas are read.  Due to limited attention, only the idea with the largest measured quality becomes well-known and consumed by the public. The expected quality of this, more realistic, \emph{a priori} theorizing is
\begin{align}
E\Bigl(\mu_{i}\big|i\in S ,i=\arg\max_{i'\in S }\hat{\mu}_{i'}\Bigr) 
& =
E\Bigl(\mu_{i}\big|i=\arg\max_{i'\in S }\hat{\mu}_{i'}\Bigr),
\end{align}
which is exactly the same as the quality of \emph{post hoc} theory (Lemma \ref{lem:ez:post-hoc-opt}).

A similar irrelevance is noted in many works of philosophy (e.g. \citet{hempel1966philosophy}; \citet{lakatos1970methodology}; \citet{rosenkrantz1977inference}; \citet{gardner1982predicting}).  But as noted by \citet{maher1988prediction} and \citet{kahn1996positive}, this irrelevance can be broken if theories are endogenous.

\section{Endogenous, Heterogeneous Theories}\label{sec:het}

Let's make the model richer, with endogenous, heterogeneous theories. This richer model is a generalization of \citet{maher1988prediction} and \citet{kahn1996positive}. Importantly, it allows for an effect I call ``Statistical Learning.'' As in Section \ref{sec:ez:practical}, I assume that the research community has limited time, and is primarily interested in finding ideas with the highest quality.

As before, there are ideas $i\in \{1,2,...,N\}$, measured idea quality $\hat{\mu}_{i}$, and true idea quality $\mu_{i}$. But now  theories come from combining a ``data input'' with a ``theory type.''  

The data input ($\mathcal{D}$ or $\mathcal{O}$) is  known. $\mathcal{D}$ is the case that the data input includes all of the measured effects ($\hat{\mu}_{1},\hat{\mu}_{2},...,\hat{\mu}_{N}$). $\mathcal{O}$ is the case that the theory is given access to none of these effects. \emph{Post hoc} theorizing, then, is represented by $\mathcal{D}$, while \emph{a priori} theorizing is $\mathcal{O}$.

The theory type has a quality $T$ which is unknown. For simplicity, assume the quality is either good (represented by $G$) or bad ($B$). Intuitively, not all theories types are the same, and we may not know how good a particular theory type is.

Combining a theory type with a data input leads to a theory, which in turn provides a recommended idea $i^{\ast}$. As before, $i^\ast$ is a random integer with support $S$, and the theory is some math and/or text that explains why $i^\ast$ is recommended.  But now I'll use conditional probability notation to account for the data input and theory type. For example, $i^{\ast}|G,\mathcal{O}$ is the recommended idea generated by a good theory type and no data (\emph{a priori}).

It's reasonable to think that the good theory type leads to higher quality ideas, \emph{a priori}. This can be formalized by first order stochastic dominance:
\begin{align}
P\left(\mu_{i^{\ast}}>x\mid G,\mathcal{O}\right)
\geq
P\left(\mu_{i^{\ast}}>x\mid B,\mathcal{O}\right),
\quad \forall x\in \mathbb{R}.
\label{eq:given-Gbetter}
\end{align}
For example, one may think that while bad theory types recommend any idea in $S$ with equal probability, good theory types are twice as likely to recommend ideas from the top quartile of $\mu_{i}$ (as compared to the second-to-top quartile). An implication of Equation \eqref{eq:given-Gbetter} is that  good theory types typically lead to higher measured quality $\hat{\mu}_{i^{\ast}}$ than bad theory types.

If theory is done \emph{post hoc}, researchers examine measured qualities $\hat{\mu}_1,\hat{\mu}_2,...,\hat{\mu}_N$, as well as the theory type, to construct a theory that selects idea $i^{\ast}|T,\mathcal{D}$. I allow $i^{\ast}|T,\mathcal{D}$ to be general, but assume the following restriction:
\begin{align}
P\left(i^{\ast}=\arg\max_{i\in S }\hat{\mu}_{i}\mid B,\mathcal{D}\right) & =1.0,
\label{eq:endo:bad-post-hoc}
\end{align}
that is, using bad type theories always lead researchers to select the idea with the strongest measured quality (provided the idea is consistent with \emph{some} theory).  This assumption can be thought of as bad theory types being unable to distinguish between ideas in $S$, and Bayesian researchers who optimize on the posterior mean based on this information and $\hat{\mu}_{i}$ (see \citealt{chen2025high}). 

After $i^{\ast}$ is chosen, readers decide if they are interested in the theory and idea. Assume readers are uninterested unless 
\begin{align}\label{eq:hurdle}
\hat{\mu}_{i^{\ast}} & >h,
\end{align}
where $h$ is some kind of economic and/or statistical hurdle. Only theories and ideas readers are interested in are published. This assumption follows the econometric literature on publication bias (\citet{andrews2019identification}).

\subsection{Darwinian Learning}

An immediate implication of heterogeneous theories is heterogeneous measured quality:
\begin{lemma}
    \label{lem:darwin}
    \begin{align}
        P\left(\hat{\mu}_{i^{\ast}}>h|G,\mathcal{O}\right) 
        &>
        P\left(\hat{\mu}_{i^{\ast}}>h|B,\mathcal{O}\right)
    \end{align}
\end{lemma}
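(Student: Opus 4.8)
The plan is to transfer the first-order stochastic dominance in \emph{true} quality assumed in \eqref{eq:given-Gbetter} to the \emph{measured} quality $\hat{\mu}_{i^{\ast}}$ via the decomposition $\hat{\mu}_{i^{\ast}} = \mu_{i^{\ast}} + \varepsilon_{i^{\ast}}$ from \eqref{eq:ez:muhat}. The crucial structural feature of the \emph{a priori} case $\mathcal{O}$ is that $i^{\ast}$ is selected without examining the data, so the measurement error $\varepsilon_{i^{\ast}}$ of the selected idea is independent of the true quality $\mu_{i^{\ast}}$ and of the theory type $T$, and its distribution is the same under $G$ and under $B$. This independence is exactly what is lost in the \emph{post hoc} case, and it is what makes the argument go through here.

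First I would condition on the realized error $\varepsilon_{i^{\ast}} = e$. The event $\{\hat{\mu}_{i^{\ast}} > h\}$ then becomes $\{\mu_{i^{\ast}} > h - e\}$, and \eqref{eq:given-Gbetter} applied at $x = h-e$ gives $P(\mu_{i^{\ast}} > h-e \mid G,\mathcal{O}) \geq P(\mu_{i^{\ast}} > h-e \mid B,\mathcal{O})$ for every $e$. Integrating this pointwise inequality against the common distribution of $\varepsilon_{i^{\ast}}$ yields $P(\hat{\mu}_{i^{\ast}} > h \mid G,\mathcal{O}) \geq P(\hat{\mu}_{i^{\ast}} > h \mid B,\mathcal{O})$. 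This is just the standard fact that first-order stochastic dominance is preserved under adding an independent random variable, specialized to the survival function evaluated at $h$.

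The main obstacle is upgrading this weak inequality to the strict $>$ stated in the lemma, since \eqref{eq:given-Gbetter} is only assumed weakly. Two ingredients are needed. First, the good and bad types must be genuinely distinct, so that the dominance in \eqref{eq:given-Gbetter} is strict at some point $x_0$; otherwise $G$ and $B$ induce identical distributions of $\mu_{i^{\ast}}$ and the strict claim fails. Second, the error must place positive probability near the relevant region. Concretely, the integrand $P(\mu_{i^{\ast}} > h-e \mid G,\mathcal{O}) - P(\mu_{i^{\ast}} > h-e \mid B,\mathcal{O})$ is nonnegative for all $e$ and, being a right-continuous difference of survival functions that is strictly positive at $e = h - x_0$, remains strictly positive on an interval to the right of that point. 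If $\varepsilon_{i^{\ast}}$ has a density with full support---as in the Normal specification of Figure \ref{fig:ez}---the integral of a nonnegative function that is positive on an interval is strictly positive, delivering the strict inequality. I would therefore flag the genuine difference between $G$ and $B$ and the support of the measurement error as the assumptions doing the real work; the dominance transfer itself is routine.
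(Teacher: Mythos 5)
Your proof takes essentially the same route as the paper's: the paper's entire proof is the one-liner ``Since $\varepsilon_{i}$ is i.i.d., adding it to $\mu_{i^{\ast}}$ preserves first-order stochastic dominance,'' which is exactly your conditioning-and-integrating argument --- under $\mathcal{O}$ the selection ignores the data, so $\varepsilon_{i^{\ast}}$ is independent of $\left(\mu_{i^{\ast}},T\right)$, and the dominance in \eqref{eq:given-Gbetter} transfers to $\hat{\mu}_{i^{\ast}}=\mu_{i^{\ast}}+\varepsilon_{i^{\ast}}$. Where you differ is that you go beyond the paper: you correctly observe that \eqref{eq:given-Gbetter} is only a weak inequality, so the dominance transfer by itself yields $\geq$, not the strict $>$ claimed in the lemma, and you supply the two missing ingredients --- strict dominance at some point $x_{0}$ (i.e., $G$ and $B$ genuinely differ) and an error distribution with a density of full support --- that upgrade it to strict. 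The paper's one-line proof silently elides this; indeed, if $G$ and $B$ induced identical distributions of $\mu_{i^{\ast}}$, the lemma as stated would fail with equality, so some assumption of the kind you flag is genuinely needed. One immaterial slip: right-continuity of the difference of survival functions at $x_{0}$ gives positivity of the integrand on an interval of $e$ to the \emph{left} of (and including) $e=h-x_{0}$, not to the right; either way the integrand is nonnegative everywhere and positive on a set of positive measure, which is all your argument requires.
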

\begin{proof}
    Since $\varepsilon_{i}$ is i.i.d., adding it to $\mu_{i^{\ast}}$ preserves first-order stochastic dominance.
\end{proof}

Lemma \ref{lem:darwin} provides an alternative way to think about the \citet{chen2022peer} (CLZ) ``peer review vs data mining'' experiment. CLZ compare stock trading ideas from peer review to data-mined trading ideas, using post-publication returns. If we call the post-publication returns $\hat{\mu}_{i^\ast}$, neither the peer-reviewed nor data-mined ideas had access to this data, so $\mathcal{O}$ holds for both groups of ideas. Then, one can think of peer-reviewed ideas as $i^\ast|T,\mathcal{O}$, since we do not know if the theory type is $G$ or $B$. In contrast, we can think of the data-mined ideas $i^\ast |B,\mathcal{O}$. As powerfully demonstrated by \citet{novy2025ai}, anyone can add text to these ideas and call it a theory. 

From this framing, CLZ's empirical results are a test of whether $G$ theory types exist. If $G$ theory types comprise a significant fraction of the theories in the CLZ sample, then Lemma \ref{lem:darwin} implies that the published strategies have higher $\hat{\mu_{i^\ast}}$. Unfortunately, CLZ find that published strategies fail to outperform, implying that $G$ theories are rare.

The CLZ experiment illustrates the Darwinian selection of theories. If we force theorists to announce their ideas before looking at the data, then the bad theory types cannot hide  behind data mining.  This intuition helps justify the belief that \emph{a priori} theorizing provides ``discipline'' and that \emph{post hoc} theorizing is ``too easy.''  The following proposition formalizes this idea:
\begin{prop}\label{prop:darwin}
{[}Darwinian Selection of Theories{]} 
\begin{align*}
P\left(G|\mathcal{O},\hat{\mu}_{i^{\ast}}>h\right)-P\left(G|\mathcal{\mathcal{D}},\hat{\mu}_{i^{\ast}}>h\right) & >0
\end{align*}
\end{prop}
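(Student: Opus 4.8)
\section*{Proof proposal}

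The plan is to reduce the claim to a comparison of two likelihood ratios via Bayes' rule. Write $\pi = P(G)$ for the prior on the theory type, which is a property of the theory itself and so is independent of the researcher's methodological choice of data input; hence $P(G\mid\mathcal{O})=P(G\mid\mathcal{D})=\pi$. For each data input $\mathcal{I}\in\{\mathcal{O},\mathcal{D}\}$, Bayes' rule then gives
\begin{align*}
P\left(G\mid\mathcal{I},\hat{\mu}_{i^{\ast}}>h\right)
=
\frac{1}{1+\frac{1-\pi}{\pi}\,L_{\mathcal{I}}},
\qquad
L_{\mathcal{I}}
\equiv
\frac{P\left(\hat{\mu}_{i^{\ast}}>h\mid B,\mathcal{I}\right)}{P\left(\hat{\mu}_{i^{\ast}}>h\mid G,\mathcal{I}\right)}.
\end{align*}
Since this posterior is strictly decreasing in $L_{\mathcal{I}}$, the proposition is equivalent to the single inequality $L_{\mathcal{O}}<L_{\mathcal{D}}$, i.e., the likelihood ratio in favor of $G$ is larger \emph{a priori} than \emph{post hoc}.

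First I would dispatch the \emph{a priori} case directly from Lemma \ref{lem:darwin}, which states $P(\hat{\mu}_{i^{\ast}}>h\mid G,\mathcal{O})>P(\hat{\mu}_{i^{\ast}}>h\mid B,\mathcal{O})$. This gives $L_{\mathcal{O}}<1$: clearing the hurdle when the theory was written before seeing the data is genuine (Darwinian) evidence that the theory type is good.

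Next I would bound the \emph{post hoc} case, which is where the only real subtlety lies. By assumption \eqref{eq:endo:bad-post-hoc} the bad type always selects $\arg\max_{i\in S}\hat{\mu}_i$, so $P(\hat{\mu}_{i^{\ast}}>h\mid B,\mathcal{D})=P(\max_{i\in S}\hat{\mu}_i>h)$. The selection rule $i^{\ast}\mid G,\mathcal{D}$ is deliberately left general, so I cannot compute $P(\hat{\mu}_{i^{\ast}}>h\mid G,\mathcal{D})$ directly; the trick is that I do not need to. Because $i^{\ast}\in S$ always, every realization of the data satisfies $\hat{\mu}_{i^{\ast}}\le\max_{i\in S}\hat{\mu}_i$, so the event $\{\hat{\mu}_{i^{\ast}}>h\}$ is pointwise contained in $\{\max_{i\in S}\hat{\mu}_i>h\}$. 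Since the distribution of $\{\hat{\mu}_1,\dots,\hat{\mu}_N\}$ is generated by nature and does not depend on the theory type, this yields $P(\hat{\mu}_{i^{\ast}}>h\mid G,\mathcal{D})\le P(\max_{i\in S}\hat{\mu}_i>h)=P(\hat{\mu}_{i^{\ast}}>h\mid B,\mathcal{D})$, and therefore $L_{\mathcal{D}}\ge1$.

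Combining the two bounds gives $L_{\mathcal{O}}<1\le L_{\mathcal{D}}$, and the monotonicity of the posterior in $L_{\mathcal{I}}$ delivers the strict inequality, with strictness inherited from the strict inequality in Lemma \ref{lem:darwin}. The conceptual heart of the argument, and the step I expect to be the hardest to state cleanly, is the \emph{post hoc} bound: the point is precisely that, because even a bad theory can costlessly mimic the best data-mined idea, clearing the hurdle \emph{post hoc} is uninformative to the point of being weakly adverse about whether the theory type is good ($L_{\mathcal{D}}\ge1$ means the posterior weakly falls below the prior), which is exactly what tips the comparison against $\mathcal{D}$.
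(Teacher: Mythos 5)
Your proposal is correct and follows essentially the same route as the paper: apply Bayes' rule to reduce the claim to a comparison of likelihood ratios, use Lemma \ref{lem:darwin} to bound the \emph{a priori} ratio on the favorable side of 1, and use Equation \eqref{eq:endo:bad-post-hoc} to show the \emph{post hoc} ratio is at most 1. The only difference is that you spell out details the paper leaves implicit (the prior's independence from the data-input choice, and the pointwise event-containment argument showing the good type cannot beat the bad type's hurdle-clearing probability under $\mathcal{D}$), which strengthens rather than changes the argument.
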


\begin{proof}
Apply Bayes rule to the LHS and simplify to yield
\begin{align*}
\frac{P\left(\hat{\mu}_{i^{\ast}}>h|G,\mathcal{O}\right)}{P\left(\hat{\mu}_{i^{\ast}}>h|B,\mathcal{O}\right)} & >\frac{P\left(\hat{\mu}_{i^{\ast}}>h|G,\mathcal{\mathcal{D}}\right)}{P\left(\hat{\mu}_{i^{\ast}}>h|B,\mathcal{\mathcal{D}}\right)}
\end{align*}
Lemma \ref{lem:darwin} shows that the LHS is greater than 1.0. But since bad theories always select the largest
$\hat{\mu}_{i}$ \emph{post hoc} (Equation \eqref{eq:endo:bad-post-hoc}), the RHS is at most 1.0. 
\end{proof}

Proposition \ref{prop:darwin} is illustrated in Figure \ref{fig:darwinian_selection}. It shows histograms generated by parameters deliberately chosen to highlight the power of Darwinian selection.

\begin{figure}[h]
    \def\tempwidth{1.0\textwidth}

    \centering
    \begin{subfigure}[b]{\textwidth}
        \centering
        \caption{\emph{A Priori} Theorizing}
        \includegraphics[width=\tempwidth]{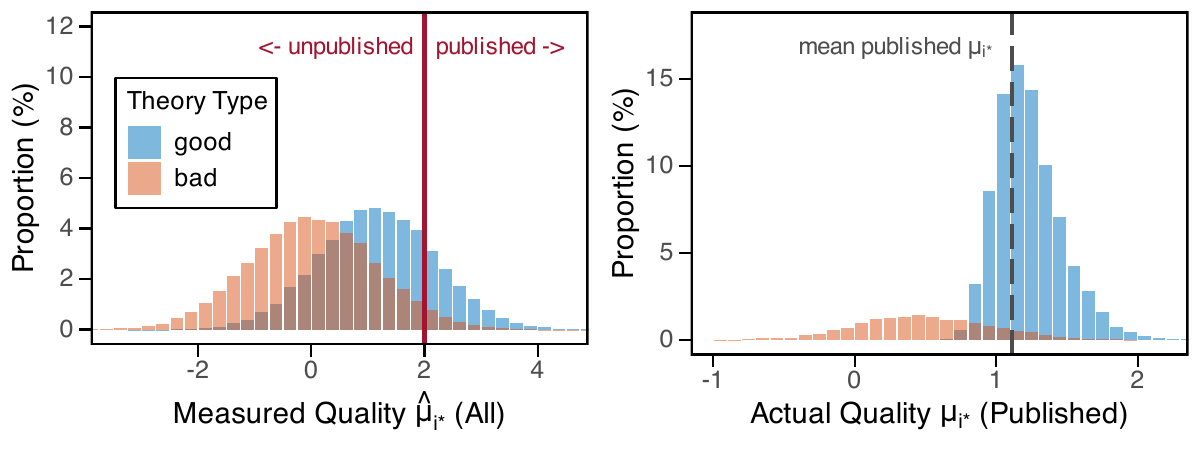}
    \end{subfigure}
    
    \vspace{1em}
    
    \begin{subfigure}[b]{\textwidth}
        \centering
        \caption{\emph{Post Hoc} Theorizing}
        \includegraphics[width=\tempwidth]{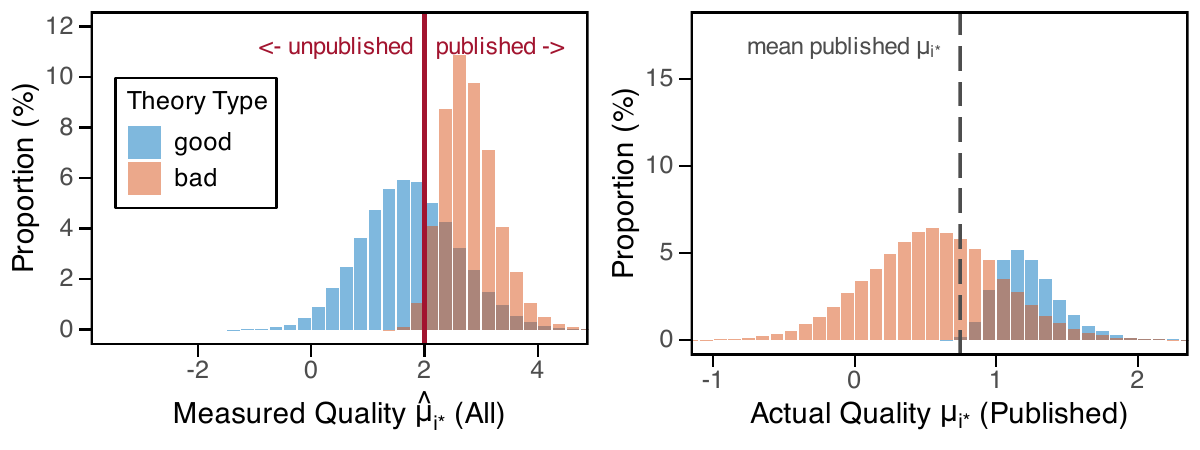}
    \end{subfigure}
    \caption{
        \textbf{Darwinian Selection Illustration.} Histograms show all selected ideas (All) or those that meet the hurdle $h=2.0$ (Published). Number of ideas $N=100$, actual quality $\mu_i \sim N(0, 0.5^2)$, noise $\varepsilon_i \sim \text{Normal}(0, 1)$. Prob of a good type is 50\%. \emph{A priori}, bad types equally weight all ideas, while good types equally weight the two best. \emph{Post hoc}, researchers select the idea with the highest $\hat{\mu}_i$ with positive \emph{a priori} weight.  
    }
    \label{fig:darwinian_selection}
\end{figure}

Under \emph{a priori} theorizing, published ideas mostly come from the good theory types (Panel (a), left). Naturally, good theory types are better at separating good ideas from bad ones, \emph{a priori}. \emph{Post hoc}, published ideas largely come from the bad theory types (Panel (b), left). This happens because bad theory types lead researchers to check far more ideas for the highest measure quality, as these bad theories cannot discriminate among ideas. As a result, bad theory types are more likely to lead to publication, despite having lower actual quality. The final result is that \emph{a priori} theorizing leads to published ideas with higher actual quality (vertical dashed lines).

Proposition \ref{prop:darwin} captures the key insight of Maher (\citeyear{maher1988prediction}; \citeyear{maher1990prediction}) and Kahn, Landsburg, and Stockman (\citeyear{kahn1992novel}, \citeyear{kahn1996positive}). If theories are heterogeneous, then forcing theorists to announce their ideas before looking at the data helps eliminate bad theories, as in Darwinian selection.  In Maher's terminology, a theory  is a ``method,'' and the theory type is ``reliability,'' but the idea is the same. 


Maher and KLS push further. They claim that, not only does \emph{a priori} theorizing produce Darwinian selection, but that the resulting hypotheses are more likely to be true. The analogue here is that $\mathcal{O}$ implies not only that $G$ is more likely, but that $\mu_{i^\ast}$ is higher. We'll see that this conclusion is not necessarily true.\footnote{\citet{barnes1996discussion} revisits Maher (1988, 1990, 1993) and does not go further. His Eq (4) stops here, and considers more deeply the terms in the Bayes rule version of $P\left(G|\mathcal{O},\hat{\mu}_{i^{\ast}}>h\right)$. 
} 

An interesting feature of Proposition \ref{prop:darwin} is that it shows a virtue of publication bias. While requiring $\hat{\mu}_{i^{\ast}}>h$ leads to biased estimates, it helps weed out bad theories types. This result is closely analogous to Lemma \ref{lem:ez:post-hoc-opt}.

\subsection{Optimal Post-Hoc Theory}

Research is not only interested in finding good theory types, but also good ideas. In fact, one can argue that finding good ideas is the ultimate goal. 

Whether \emph{post hoc} theory helps or hurts for finding good ideas is characterized by the following proposition:
\begin{prop}\label{prop:optimal_post_hoc}
{[}Optimal Post Hoc Theory{]}
\begin{align}
E\left(\mu_{i^{\ast}}|\mathcal{\mathcal{D}},\hat{\mu}_{i^{\ast}}>h\right) & >E\left(\mu_{i^{\ast}}|\mathcal{O},\hat{\mu}_{i^{\ast}}>h\right)
\end{align}
if and only if 
\begin{align}
\text{\ensuremath{\left[\text{Statistical Learning}\right]}} & >\left[\text{Darwinian Learning}\right]
\end{align}
where 
\begin{align}
\left[\text{Darwinian Learning}\right] 
& \equiv
    \big[
        P\left(G|\mathcal{O},\hat{\mu}_{i^{\ast}}>h\right)-
    P\left(G|\mathcal{D},\hat{\mu}_{i^{\ast}}>h\right)
    \big]
    \notag
    \\
& \quad\times
\big[
    E\left(\mu_{i^{\ast}}|G,\mathcal{O},\hat{\mu}_{i^{\ast}}>h\right) - 
    E\left(\mu_{i^{\ast}}|B,\mathcal{O},\hat{\mu}_{i^{\ast}}>h\right)
\big] 
\label{eq:endo:darwinian_learning} \\
\left[\text{Statistical Learning}\right] & 
\equiv 
P\left(G|\mathcal{D}, \hat{\mu}_{i^{\ast}}>h\right)\left[
    E\left(\mu_{i^{\ast}}|G,\mathcal{D}, \hat{\mu}_{i^{\ast}}>h\right) - 
    E\left(\mu_{i^{\ast}}|G,\mathcal{O}, \hat{\mu}_{i^{\ast}}>h\right)
\right] \notag \\ 
& \quad +
P\left(B|\mathcal{D}, \hat{\mu}_{i^{\ast}}>h\right)\left[
    E\left(\mu_{i^{\ast}}|B,\mathcal{D}, \hat{\mu}_{i^{\ast}}>h\right) - 
    E\left(\mu_{i^{\ast}}|B,\mathcal{O}, \hat{\mu}_{i^{\ast}}>h\right)
\right]
\label{eq:endo:statistical_learning}
\end{align}
\end{prop}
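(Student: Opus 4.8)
The plan is to reduce the claim to an algebraic identity: I will show that the difference of the two conditional expectations equals exactly $[\text{Statistical Learning}]-[\text{Darwinian Learning}]$, after which the ``if and only if'' is immediate, since the left-hand side of the desired inequality exceeds the right-hand side precisely when this difference is positive.

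First I would apply the law of total expectation to each side, conditioning on whether the theory type is $G$ or $B$. On the \emph{post hoc} side this gives
\begin{align*}
E(\mu_{i^{\ast}}|\mathcal{D},\hat{\mu}_{i^{\ast}}>h)=P(G|\mathcal{D},\hat{\mu}_{i^{\ast}}>h)\,E(\mu_{i^{\ast}}|G,\mathcal{D},\hat{\mu}_{i^{\ast}}>h)+P(B|\mathcal{D},\hat{\mu}_{i^{\ast}}>h)\,E(\mu_{i^{\ast}}|B,\mathcal{D},\hat{\mu}_{i^{\ast}}>h),
\end{align*}
and the \emph{a priori} side follows analogously with $\mathcal{O}$ in place of $\mathcal{D}$. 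Both decompositions are valid because $\{G,B\}$ partitions the theory types, with posterior probabilities that sum to one in each conditioning environment.

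Second I would form the difference of the two sides and insert the bridging term $P(G|\mathcal{D},\hat{\mu}_{i^{\ast}}>h)\,E(\mu_{i^{\ast}}|G,\mathcal{O},\hat{\mu}_{i^{\ast}}>h)+P(B|\mathcal{D},\hat{\mu}_{i^{\ast}}>h)\,E(\mu_{i^{\ast}}|B,\mathcal{O},\hat{\mu}_{i^{\ast}}>h)$ — the \emph{a priori} conditional means reweighted by the \emph{post hoc} type probabilities — by adding and subtracting it. Grouping the terms that share the \emph{post hoc} weights $P(T|\mathcal{D},\hat{\mu}_{i^{\ast}}>h)$ but differ in their means ($\mathcal{D}$ versus $\mathcal{O}$) reproduces $[\text{Statistical Learning}]$ exactly as in \eqref{eq:endo:statistical_learning}. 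The remaining terms share the \emph{a priori} means but differ in the type weights; substituting $P(B|\cdot)=1-P(G|\cdot)$ in both environments collapses them to $[P(G|\mathcal{O},\hat{\mu}_{i^{\ast}}>h)-P(G|\mathcal{D},\hat{\mu}_{i^{\ast}}>h)]\,[E(\mu_{i^{\ast}}|G,\mathcal{O},\hat{\mu}_{i^{\ast}}>h)-E(\mu_{i^{\ast}}|B,\mathcal{O},\hat{\mu}_{i^{\ast}}>h)]$, which is precisely $[\text{Darwinian Learning}]$ from \eqref{eq:endo:darwinian_learning}, entering with a minus sign. Hence the difference of the two sides equals $[\text{Statistical Learning}]-[\text{Darwinian Learning}]$, and the equivalence follows.

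The calculation itself is routine; the only genuine choice is the bridging term, and that is where I expect the main difficulty to lie. This is a discrete Oaxaca--Blinder-style decomposition, splitting the change in expected quality into a within-type component (means moving when data is added, holding the type mix at its \emph{post hoc} level) and a between-type component (the type mix shifting, holding means at their \emph{a priori} level). Pairing the \emph{post hoc} weights with the within-type piece and the \emph{a priori} means with the between-type piece is exactly what makes the two halves coincide term by term with the paper's definitions; the symmetric alternative (\emph{a priori} weights, \emph{post hoc} means) yields the same total but apportions the pair differently. Confirming that the chosen apportionment matches \eqref{eq:endo:darwinian_learning}--\eqref{eq:endo:statistical_learning} line by line is the one place where care is needed.
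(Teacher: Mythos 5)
Your proof is correct and takes essentially the same approach as the paper: the paper's Appendix A argument adds and subtracts exactly your bridging term --- the \emph{a priori} within-type means weighted by the \emph{post hoc} type probabilities, written there as $\tilde{E}\{\tilde{E}[\mu_{i^{\ast}}|T,\mathcal{O}]|\mathcal{D}\}$ --- and groups terms into the same within-type (Statistical Learning) and between-type (Darwinian Learning) pieces, using $P(B|\cdot)=1-P(G|\cdot)$ just as you do. The only cosmetic difference is sign convention: the paper computes $E(\mu_{i^{\ast}}|\mathcal{O},\hat{\mu}_{i^{\ast}}>h)-E(\mu_{i^{\ast}}|\mathcal{D},\hat{\mu}_{i^{\ast}}>h)$, which equals Darwinian Learning minus Statistical Learning, whereas you compute the negative of this.
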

The proof is in Appendix \ref{sec:app:proof1}.

The proposition says that whether \emph{post hoc} or \emph{a priori}  theorizing leads to better ideas depends on the relative size of two effects:
\begin{enumerate}
    \item Darwinian Learning: This measures the ultimate effect of Darwinian selection (Proposition \ref{prop:darwin}), which occurs when researchers are forced to predict without data ($\mathcal{O}$).  Intuitively, Darwinian selection improves ideas only to the extent that $G$ theory types find higher $\mu_{i^\ast}$ compared to $B$ theory types (second line of Equation \eqref{eq:endo:darwinian_learning}). 
    \item Statistical Learning: This measures how idea quality $\mu_{i^{\ast}}$ improves when the researcher has access to more data ($\mathcal{D}$). Just as how a Bayesian improves her inferences with new evidence, theorists develop higher quality ideas with access to data. 
\end{enumerate}
Naturally, if Statistical Learning exceeds Darwinian Learning, then it's often better to look at the data---i.e. \emph{post hoc} theory may be optimal.

There is no hard and fast rule for which effect is larger. There are certainly settings where Statistical Learning is miniscule (e.g. when the data is extremely noisy). And there are certainly settings where Darwinian Learning is ineffective (e.g. when all theory types are the same). 

Similarly, there are contradictory historical examples. Mendeleev's prediction of elements is a shockingly impressive example of \emph{a priori} theorizing. But Planck's law of radiation is a shockingly impressive example of \emph{post hoc} theorizing. Proposition \ref{prop:optimal_post_hoc} provides a way to understand these seemingly contradictory phenomena.


\subsection{When is \emph{post hoc} theorizing optimal?}\label{sec:het-figures}

If theories are homogenous in quality, then there is no Darwinian Learning, and thus Proposition \ref{prop:optimal_post_hoc} implies that \emph{post hoc} theory is optimal.

Figure \ref{fig:prop2-het} illustrates this phenomenon, by examining many variations of the model from Figure \ref{fig:darwinian_selection}. In Figure \ref{fig:prop2-het}, theory types were extremely heterogeneous: bad theory types cannot eliminate any ideas, while good theory types eliminate the worst 98\% of ideas. This extreme-heterogeneity model is shown in the right most markers of Figure \ref{fig:prop2-het}. For this model, the improvement from \emph{post hoc} theory is a negative 30\%: i.e. published ideas have 30\% lower quality under \emph{post hoc} theory (top panel). Correspondingly, Darwinian Learning is very large, and far exceeds Statistical Learning (bottom panel).

However, reducing the heterogeneity of theories leads to \emph{post hoc} theory being optimal. Moving from right to left in Figure \ref{fig:prop2-het}, the improvement from \emph{post hoc} theory turns positive once good theory types can eliminate the worst 75\% of ideas. Here, Statistical Learning is exactly equal to Darwinian Learning (bottom panel). For models with any less heterogeneity, \emph{post hoc} theory is optimal. 

\begin{figure}
    \centering
    \includegraphics[width=0.8\textwidth]{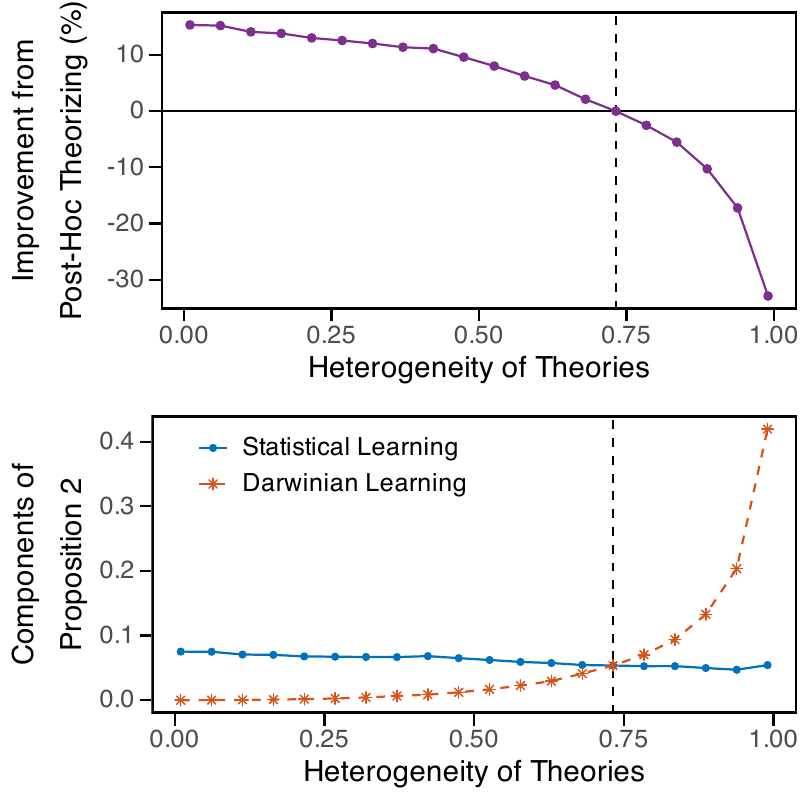}
    \caption{\textbf{Optimal Theorizing vs Heterogeneity of Theories.} Each marker is one model. `Improvement from Post-Hoc Theorizing' is $E\left(\mu_{i^{\ast}}|\mathcal{D}, \hat{\mu}_{i^{\ast}}>h\right)/E\left(\mu_{i^{\ast}}|\mathcal{O}, \hat{\mu}_{i^{\ast}}>h\right)-1$ (see Proposition \ref{prop:optimal_post_hoc}). `Heterogeneity of Theories' is the share of ideas that good theories can eliminate \emph{a priori}. Otherwise, the model is the same as in Figure \ref{fig:darwinian_selection}, in which bad theories cannot eliminate any ideas.}
    \label{fig:prop2-het}
\end{figure}

\subsubsection{Large Datasets and Optimal Theorizing}\label{sec:het-bigdata}

Another implication of Proposition \ref{prop:optimal_post_hoc} is that larger datasets tend to imply \emph{post hoc} theory is optimal. Naturally, larger datasets imply more Statistical Learning. 

To model this, one can think of measured quality $\hat{\mu}_i$ as a t-statistic, in which case a large dataset implies high $\Var(\hat{\mu}_i)$.  Intuitively, as the sample size increases, so does the probability of finding statistically-significant t-stats (\citealt{abadie2020statistical}). 

To formalize this interpretation, suppose that underlying Equation \eqref{eq:ez:muhat} is a panel data model:
\begin{align}\label{eq:panel-data}
    x_{i,j} &= \chi_i + e_{i,j}, \quad j=1,2,\cdots,M \\
     \Var(e_{i,j}) &= \sigma^2,
\end{align}
where $M$ is the number of observations for idea $i$. Moreover, suppose we fix the hurdle for readers' interest at $h=2.0$ (see Equation \eqref{eq:hurdle}). Then a natural way to map Equation \eqref{eq:panel-data} to Equation \eqref{eq:ez:muhat} is to define $\hat{\mu}_i$ as the t-statistic for $\chi_i$:
\begin{align}\label{eq:NTscaling}
    \hat{\mu}_{i,t} &= \frac{\bar{x}_i}{\sigma_i} \sqrt{M}
    = \underbrace{\frac{\sqrt{M}}{\sigma_i} \chi_i}_{\mu_i}
    + \underbrace{\frac{\sqrt{M}}{\sigma_i}\bar{e}_i}_{\varepsilon_i}, \\
    \label{eq:NTscaling2}
    \Var_{i}\left(\varepsilon_i\right) &= 1.0,
\end{align}
where $\Var_{i}(\varepsilon_i)$ is the variance holding fixed the idea, and Equation \eqref{eq:NTscaling2} assumes that the central limit theorem holds and $\sigma^2$ is observed.  Thus, in this setting, the standard deviation of $\hat{\mu}_i$ is increasing in the sample size $M$. 

Figure \ref{fig:prop2-bigdata} illustrates how large datasets affect optimal theorizing, interpreted through the panel data model (Equations \eqref{eq:panel-data}-\eqref{eq:NTscaling2}). It revisits the model from Figure \ref{fig:darwinian_selection}, but examines alternative choices for the variance of $\mu_i$. The x-axis plots $\sqrt{\Var(\hat{\mu}_i)}$, which can be interpreted as either the dispersion of t-statistics or  a measure of the sample size.  

\begin{figure}
    \centering
    \includegraphics[width=0.8\textwidth]{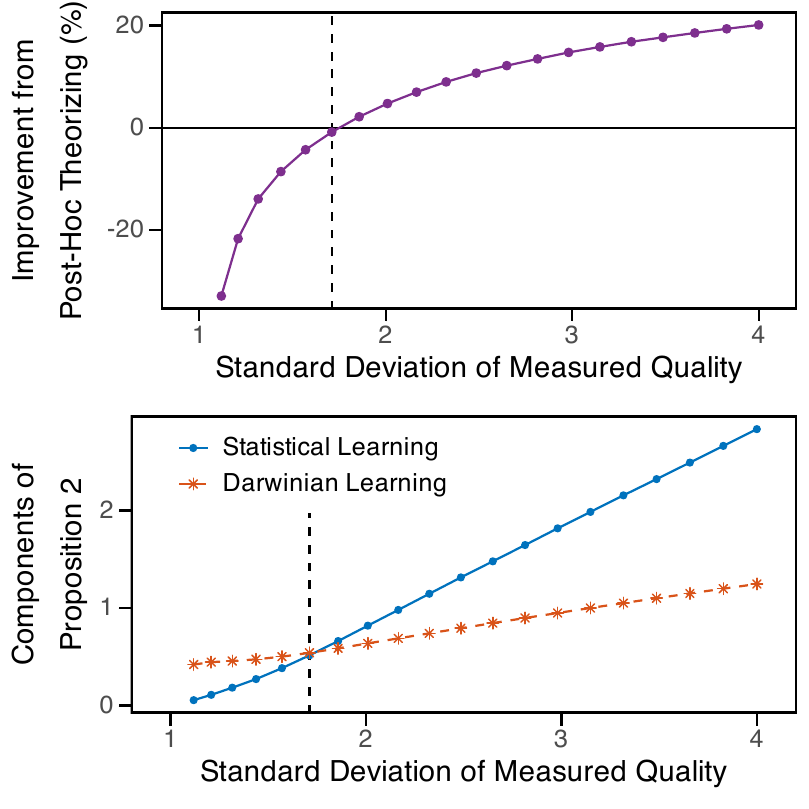}
    \caption{\textbf{Optimal Theorizing vs Sample Size.} Each marker is one model with a different $\sqrt{\Var(\mu_i)}$. The rest of the model is the same as in Figure \ref{fig:darwinian_selection}. `Improvement from Post-Hoc Theorizing' is $E\left(\mu_{i^{\ast}}|\mathcal{D}, \hat{\mu}_{i^{\ast}}>h\right)/E\left(\mu_{i^{\ast}}|\mathcal{O}, \hat{\mu}_{i^{\ast}}>h\right)-1$. `Standard Deviation of Measured Quality' is $\sqrt{\Var(\hat{\mu}_i)}$, which can be interpreted as the dispersion of t-statistics or a measure of sample size (Equations \eqref{eq:panel-data}-\eqref{eq:NTscaling2}).}
    \label{fig:prop2-bigdata}
\end{figure}

The left-most markers correspond to the model from Figure \ref{fig:darwinian_selection}. This model was selected to illustrate the power of Darwinian selection. Thus, $\Var(\hat{\mu}_i)$ is close to 1.0, indicating the measured quality is close to the null distribution, and noise dominates the data. Thus, Statistical Learning is small, and \emph{a priori} theorizing is optimal.

But as $\Var(\hat{\mu}_i)$ increases, so does the amount of signal, holding fixed $\Var(\varepsilon_i)$ at 1.0. The amount of Statistical Learning then increases, and \emph{post hoc} theorizing, starts to become optimal at approximately $\sqrt{\Var(\mu_i)}=1.75$. 

$\sqrt{\Var(\mu_i)}=1.75$ relatively small. For comparison, \citet{chen2020publication} and \citet{jensen2023there} estimate $\sqrt{\Var(\mu_i)}\approx 3.0$ for empirical asset pricing (see discussion in \citealt{chen2022publication}). For settings like this, where $\hat{\mu}_i$ provides a strong signal about the underlying $\mu_i$, Statistical Learning most likely exceeds Darwinian Learning, and thus \emph{post hoc} theorizing is typically optimal.

\subsubsection{Optimal Theorizing in Modern Economics}\label{sec:het-econcomment}

As a field of research matures, institutions arise that standardize the many aspects of research, including the peer review process, the statistical analysis, and  theory. It is reasonable to think, then, that mature fields have theories that are relatively homogeneous in quality. In fact, homogeneous theory quality is a reasonable definition of a mature field.

Economics is arguably mature. Before the 1950s, there was wild variety in the way that economists theorized. But theory began to solidify with the contributions of Arrow and Samuelson. And though behavioral economics has risen in popularity in recent decades, and the 2008 financial crisis brought on significant criticism of economic models, the basic structure of theory has been largely stable since the 1980s. It is thus reasonable to think that economic theories are fairly homogeneous in quality, and that Darwinian Learning is small. 

At the same time, the modern era has seen the rise of huge datasets and enormous computing power. As discussed in Section \ref{sec:het-bigdata}, this implies that standardized measures of idea quality are dispersed, and thus Statistical Learning is large. 

Taken together, these arguments imply that \emph{post hoc} theory is typically optimal in the modern era of economics.

This argument has some surprising implications. Pre-analysis plans should \emph{not} be followed.  Journals should \emph{favor}  theories that accommodate the data, \emph{post hoc}. At least, these are the prescriptions for a literature that focuses on finding the best ideas, and places less emphasis on unbiasedness. 

While it may feel uncomfortable to favor results over unbiasedness, this is precisely the approach taken by the computer science literature. Following this practical route, computer science has essentially taken over machine learning, which could have been the territory of statisticians. Perhaps the maturation of statistical theories, as well as the rise of big data, tilted the balance in favor of \emph{post hoc} theorizing, and thus the dominance of computer scientists.

\section{Conclusion}

This paper presents a framework for understanding several questions about the scientific method: Why is \emph{post hoc} theorizing viewed as a problem? How do we square this problem with highly-successful \emph{post hoc} theories? Does the classical view of \emph{post hoc} theory still hold up in the modern era of big data? 

The framework shows that the distrust of \emph{post hoc} theorizing is to a significant extent a relic of idealized, pre-modern statistics. With practical constraints on researchers' time, and a focus on results over unbiasedness, \emph{a priori} theorizing is not always superior. Instead, there is a trade-off between  Darwinian Learning, which comes from forcing theorists into prediction contests, and Statistical Learning, which arises as researchers learn from data. With  modern datasets and computing power, Statistical Learning is clearly very significant. At the same time, it is unclear that Darwinian Learning still matters, in a world of mature theories.

A caveat is that \emph{a priori} theorizing has benefits that are omitted from my analysis. Most important, \citet{barnes2008paradox} points out that prediction contests provide an accessible, democratic way to establish what is good science. The main alternative is the peer review process, which is inscrutable to outsiders, and can potentially be abused.\footnote{Additionally, KLS argue that the choice of \emph{a priori} vs \emph{post hoc} theorizing may be endogenous, which can lead to additional selection effects, over and above Proposition \ref{prop:optimal_post_hoc}. However, the basic logic that \emph{a priori} theorizing helps through inducing selection is still captured by Proposition \ref{prop:optimal_post_hoc}.}

A second caveat is that none of this analysis matters if economic theories are not, eventually, tested with post-research data. If economic theories are really not falsifiable, then the value of \emph{a priori} and \emph{post hoc} theorizing is an unscientific question, and thus beyond the scope of this paper.

\begin{appendices}
\section{Proof of Proposition \ref{prop:optimal_post_hoc}}\label{sec:app:proof1}
\begin{proof}
    For ease of notation, let $\tilde{E}$ be the expectation operator conditioned on $\hat{\mu}_{i^{\ast}}>h$ and define $\tilde{P}$ similarly. Also define conditioning on $I\in \left\{ \mathcal{O},\mathcal{D}\right\}$ and $I'\in \left\{ \mathcal{O},\mathcal{D}\right\}$ as
    \begin{align}
        \tilde{E}\left\{ \tilde{E}\left(\mu|T,I\right)|I'\right\} 
            &\equiv \tilde{P}\left(G|I'\right)\tilde{E}\left(\mu|G,I\right)+\tilde{P}\left(B|I'\right)\tilde{E}\left(\mu|B,I\right),
    \end{align}
    which can be rewritten as
    \begin{align}\label{eq:app-proof-2}
        \tilde{E}\left\{ \tilde{E}\left(\mu|T,I\right)|I'\right\} 
            &\equiv \tilde{E}\left(\mu|B,I\right) 
            + \tilde{P}\left(G|I'\right)
            \left\{ \tilde{E}\left(\mu|G,I\right)-\tilde{E}\left(\mu|B,I\right)\right\}. 
    \end{align}

    The expected quality from \emph{a priori} theory can be written as
    \begin{align}
    \tilde{E}\left\{ \mu_{i^{\ast}}|\mathcal{O}\right\} 	
    &= \tilde{E}\left\{ \tilde{E}\left[\mu_{i^{\ast}}|T,\mathcal{O}\right]|\mathcal{O}\right\} 
    - \tilde{E}\left\{ 
        \tilde{E}\left[\mu_{i^{\ast}}|T,\mathcal{O}\right]|\mathcal{D}
    \right\} \notag \\
    &\quad + \tilde{E}\left\{ \tilde{E}\left[\mu_{i^{\ast}}|T,\mathcal{O}\right]|\mathcal{D}\right\}, 
    \end{align}  
    where the first term uses iterated expectations and the last two terms sum to zero. Thus the expected quality difference of \emph{a priori} vs \emph{post hoc} theory is
    \begin{align}
        \tilde{E}\left\{ \mu_{i^{\ast}}|\mathcal{O}\right\} -\tilde{E}\left\{ \mu_{i^{\ast}}|\mathcal{D}\right\} 	
            &=\tilde{E}\left\{ \tilde{E}\left[\mu_{i^{\ast}}|T,\mathcal{O}\right]|\mathcal{O}\right\} -\tilde{E}\left\{ \tilde{E}\left[\mu_{i^{\ast}}|T,\mathcal{O}\right]|\mathcal{D}\right\} \notag \\
            &\quad-\left\{
                \tilde{E}\left[\mu_{i^{\ast}}|\mathcal{D}\right]
                -\tilde{E}\left\{\left[\mu_{i^{\ast}}|T,\mathcal{O}\right]|\mathcal{D}\right\}
            \right\}
            \label{eq:optimal_post_hoc_1}
    \end{align}
    The second line of the RHS of \eqref{eq:optimal_post_hoc_1} is $\left[\text{Statistical Learning}\right]$ (just apply iterated expectations to $\tilde{E}\left[\mu_{i^{\ast}}|\mathcal{D}\right]$).

    The first line of the RHS of \eqref{eq:optimal_post_hoc_1} can be rewritten using the law of total probability and \eqref{eq:app-proof-2}:
    \begin{align}    
    & \tilde{E}\left\{ \tilde{E}\left[\mu_{i^{\ast}}|T,\mathcal{O}\right]|\mathcal{O}\right\} 
      -\tilde{E}\left\{ \tilde{E}\left[\mu_{i^{\ast}}|T,\mathcal{O}\right]|\mathcal{D}\right\} \notag \\
    &= \tilde{E}\left[\mu_{i^{\ast}}|B,\mathcal{O}\right]
       + \tilde{P}\left(G|\mathcal{O}\right) 
         \left\{ \tilde{E}\left[\mu_{i^{\ast}}|G,\mathcal{O}\right]
         -\tilde{E}\left[\mu_{i^{\ast}}|B,\mathcal{O}\right]\right\} \notag \\
    &\quad -\tilde{E}\left[\mu_{i^{\ast}}|B,\mathcal{O}\right]
           -\tilde{P}\left(G|\mathcal{D}\right) 
           \left\{ 
            \tilde{E}\left[\mu_{i^{\ast}}|G,\mathcal{O}\right]
            -\tilde{E}\left[\mu_{i^{\ast}}|B,\mathcal{O}\right]
            \right\} \notag \\
    &= \left[\tilde{P}\left(G|\mathcal{O}\right)-\tilde{P}\left(G|\mathcal{D}\right)\right]
        \left\{ \tilde{E}\left[\mu_{i^{\ast}}|G,\mathcal{O}\right]
               -\tilde{E}\left[\mu_{i^{\ast}}|B,\mathcal{O}\right]\right\} 
       , \label{eq:optimal_post_hoc_2}
    \end{align}
    and the last line is $\left[\text{Darwinian Learning}\right]$.
\end{proof}

\end{appendices}

\newpage

\printbibliography

\end{document}